 \newcommand{\Problem}{\textsc{EMB}}
\newcommand{\ProblemT}{\textsc{EMB(2)}}
\newcommand{\variab}{\nu}
\newcommand{\ChunkType}{\tau}
\newcommand{\VirtualNodes}{\ensuremath{V}}
\newcommand{\aroot}{\emph{root}}
\newcommand{\clauses}{\alpha} \newcommand{\vars}{\beta}
\newcommand{\variables}{\beta} \newcommand{\achunk}{\ensuremath{c}}
 \newcommand{\capa}{\emph{cap}}
\newcommand{\capacity}{\emph{cap}}
\newcommand{\dist}{\emph{dist}}
 \newcommand{\Cost}{\textsc{F}}
\newcommand{\Tree}{\ensuremath{T}}
\newcommand{\CostTrans}{\ensuremath{b_1}}
\newcommand{\CostCom}{\ensuremath{b_2}}
\newcommand{\Vms}{\ensuremath{n_V}} 
 \newcommand{\TSAT}{\textsc{3-Sat}}
 \newcommand{\SAT}{\textsc{Sat}}
\newcommand{\ZSAT}{\textsc{2-Sat}}
\newcommand{\Formula}{\ensuremath{\Psi}}
\newcommand{\Thr}{\ensuremath{Th}}
\newcommand{\positive}{\ensuremath{positive}}
\newcommand{\negative}{\ensuremath{negative}}
\newcommand{\Val}{\ensuremath{Val}}
\newcommand{\Sol}{\ensuremath{SOL}}
\definecolor{blueLink}{rgb}{0,0.2,0.8}
\newcommand{\lref}[2][]{\hyperref[#2]{#1~\ref*{#2}}}
\newtheorem{theorem}{Theorem}
\newtheorem{lemma}[theorem]{Lemma}
\definecolor{brown}{rgb}{0.4,0,0} 
\definecolor{purple}{rgb}{0.2,0,0.6}
\definecolor{hotpink}{rgb}{1,0.4,0.7}
\title{Hardness of Virtual Network Embedding with Replica Selection}
\author{Carlo Fuerst, Maciej Pacut, Stefan Schmid }
\date{}
\begin{document}

\maketitle

\sloppy

\begin{abstract}

Efficient embedding virtual clusters in physical network is a
challenging problem. In this paper we consider a scenario where
physical network has a structure of a balanced tree. This assumption
is justified by many real-world implementations of datacenters.

We consider an extension to virtual cluster embedding by introducing
replication among data chunks. In many real-world applications, data
is stored in distributed and redundant way. This assumption introduces
additional hardness in deciding what replica to process.

By reduction from classical NP-complete problem of Boolean
Satisfiability, we show limits of optimality of embedding. Our result
holds even in trees of edge height bounded by three. Also, we show
that limiting replication factor to two replicas per chunk type does
not make the problem simpler.
\end{abstract}

\section{Introduction}

Server virtualization has revamped the server business over the last
years, and has radically changed the way we think about resource
allocation: today, almost arbitrary computational resources can be
allocated on demand.  Moreover, the virtualization trend now started
to spill over to the network: batch-processing applications such as
MapReduce often generate significant network traffic (namely during
the so-called shuffle phase)~\cite{amazonbw}, and in order to avoid
interference in the underlying physical network and in order to
provide a predictable application performance, it is important to
provide performance isolation and bandwidth guarantees for the virtual
network connecting the virtual machines.~\cite{talk-about}

Prominent example of large-scale framework that is used in
datacenters is MapReduce. In such applications often network usage becomes
the limiting factor for application performance. Sharing single link
among multiple node-node communications requires reserving certain
amount of network traffic to avoid slowing the transfer
down. In order to model MapReduce execution, bandwidth has to be
reserved for transfer of chunks to nodes and for node-node
interconnection. Resulting virtual network to be embedded consists of clique and
single links incoming to its vertices. Described abstraction is called
\emph{virtual
  cluster}~\cite{oktopus,proteus}.

\textbf{Our Contributions.}We show that minimizing network footprint is NP-hard in presence of
multiple replicas of the same chunk type.  Moreover, we show that
NP-hard problems already arise in small-diameter networks (as they are
widely used today~\cite{fattree}), and even if the number of replicas
is bounded by two.

\section{Model}\label{sec:model}

To get started, and before introducing our formal model and its
constituting parts in detail, we will discuss the practical
motivation.

\subsection{Background and Practical Motivation}

Our model is motivated by batch-processing applications such as
MapReduce.  Such applications use multiple virtual machines to process
data, initially often redundantly stored in a distributed file system
implemented by multiple servers.~\cite{mapreduce} The standard
datacenter topologies today are (multi-rooted) fat-tree
resp.~\emph{Clos} topologies~\cite{fattree,vl2}, hierarchical networks
recursively made of sub-trees at each level; servers are located at
the tree leaves. Given the amount of multiplexing over the mesh of
links and the availability of multi-path routing protocol, e.g.~ECMP,
the redundant links can be considered as a single aggregate link for
bandwidth reservations~\cite{oktopus,proteus}.

During execution, batch-processing applications typically cycle
through different phases, most prominently, a mapping phase and a
reducing phase; between the two phases, a shuffling operation is
performed, a phase where the results from the mappers are communicated
to the reducers. Since the shuffling phase can constitute a
non-negligible part of the overall runtime~\cite{orchestra}, and since
concurrent network transmissions can introduce interference and
performance unpredictability~\cite{amazonbw}, it is important to
provide explicit minimal bandwidth guarantees~\cite{talk-about}.  In
particular, we model the virtual network connecting the virtual
machines as a virtual cluster~\cite{oktopus,talk-about,proteus};
however, we extend this model with a notion of data-locality.  In
particular, we distinguish between the bandwidth needed between
assigned chunk and virtual machine ($\CostTrans$) and the bandwidth
needed between two virtual machines ($\CostCom$); in practice, for
applications with a large ``mapping ratio'' where the mapping phase
already reduces the data size significantly, it may hold that
$\CostCom\ll\CostTrans$.

\subsection{Fundamental Parts}

Let us now introduce our model more formally. It consists of three
fundamental parts: (1) the substrate network (the servers and the
connecting physical network), (2) the to be processed input (the data
chunks), and (3) the virtual network (the virtual machines and the
logical network connecting the machines to each other as well as to
the chunks).

\textbf{\emph{The Substrate Network.}} The substrate network (also
known as the \emph{host graph}) represents the physical resources: a
set $S$ of $n_S=|S|$ servers interconnected by a network consisting of
a set $R$ of routers (or switches) and a set $E$ of (symmetric) links;
we will often refer to the elements in $S\cup R$ as the
\emph{vertices}. We will assume that the inter-connecting network
forms an (arbitrary, not necessarily balanced or regular) tree, where
the servers are located at the tree leaves.  Each server $s\in S$ can
host zero or one virtual machine. Each link $e\in E$ has a certain
bandwidth capacity $\capacity(e)$.

\textbf{\emph{The Input Data.}} The to be processed data constitutes
the input to the batch-processing application.  The data is stored in
a distributed manner; this spatial distribution is given and not
subject to optimization.  The input data consists of $\tau$ different
\emph{chunk types} $\{\achunk_1, \ldots, \achunk_{\ChunkType}\}$,
where each chunk type $\achunk_i$ can have $r_i\geq 1$ instances (or
replicas) $\{\achunk_{i}^{(1)},\ldots, \achunk_{i}^{(r_i)}\}$, stored
at different servers.  It is sufficient to process one replica, and we
will sometimes refer to this replica as the \emph{active} (or
selected) replica.

The input data is stored redundantly, and the algorithm has the
freedom to choose a replica for each chunk type, and assign it to a
virtual machine (i.e., \emph{node}).

\textbf{\emph{The Virtual Network.}} The virtual network consists of a
set $\VirtualNodes$ of $n_V=|\VirtualNodes|$ virtual machines,
henceforth often simply called \emph{nodes}.  Each node
$v \in \VirtualNodes$ can be placed (or, synonymously,
\emph{embedded}) on a server; this placement can be subject to
optimization.

Please note that number of nodes might exceed the number of chunk
types. Excessive machines (or idle) do not process chunks, but
participate in shuffle and reduce phase of Map-Reduce. Excessive
machines do not have matched chunks, therefore their transportation
cost is zero. Every machine, idle or not -- incline communication cost
to other machines.

We will denote the server $s$ hosting node $v$ by $\pi(v)=s$.  Since
these nodes process the input data, they need to be assigned and
connected to the chunks. Concretely, for each chunk type $\achunk_i$,
exactly one replica $\achunk_{i}^{(j)}$ must be processed by exactly
one node $v$; which replica $\achunk_{i}^{(k)}$ is chosen is subject
to optimization, and we will denote by $\mu$ the assignment of nodes
to chunks.

In order to ensure a predictable application performance, both the
connection to the chunks as well as the interconnection between the
nodes may have to ensure certain minimal bandwidth guarantees; we will
refer to the first type of virtual network as the \emph{(chunk) access
  network}, and to the second type of virtual network as the
\emph{(node) inter-connect}; the latter is modeled as a complete
network (a \emph{clique}). Concretely, we assume that an active chunk
is connected to its node at a minimal (guaranteed) bandwidth
$\CostTrans$, and a node is connected to any other node at minimal
(guaranteed) bandwidth $\CostCom$.

We allow the number of chunk types to be smaller
than the number of nodes.  The ``idle'' nodes however do participate
in the inter-connect communication (in practical terms: in the shuffle
phase and the reducing phase).

\subsection{Optimization Objective}

Our goal is to develop algorithms which minimize the \emph{resource
  footprint}: the guaranteed bandwidth allocation (or synonymously:
\emph{reservation}) on all links of the given embedding; note that
only the resource allocation at the links but not at the servers
depends on the replica selection or embedding. Thus, we on the one
hand aim to embed the nodes in a locality-aware manner, close to the
input data (the chunks), but at the same time also aim to embed the
nodes as close as possible to each other.

Formally, let $\dist(v,\achunk)$ denote the distance (in the
underlying physical network $\Tree$) between a node $v$ and its
assigned (active) chunk replica $\achunk$, and let $\dist(v_1,v_2)$
denote the distance between the two nodes $v_1$ and $v_2$.  We define
the \emph{footprint} $\Cost(v)$ of a node $v$ as follows:
$$
\Cost(v) = \underbrace{\sum_{\achunk\in \mu(v)} \CostTrans \cdot
  \dist(v,\achunk)}_{\text{transportation}} + \underbrace{\frac{1}{2}
  \cdot \sum_{v' \in \VirtualNodes\setminus\{v\}} \CostCom \cdot
  \dist(v,v')}_{\text{inter-connect}},
$$
\noindent where $\mu(v)$ is the set of chunks assigned to $v$. Our
goal is to minimize the overall footprint
$\Cost=\sum_{v\in V} \Cost(v)$.


\subsection{Decision problem}

In order to perform a reduction from NP-complete problem, we need to
transform our optimization problem to decision problem. To do so, we
define $\Problem$ as a set containing pair $\lbrace k, I \rbrace$,
iff. $I$ is an instance of virtual cluster embedding problem that has
feasible (bandwidth-respecting) solution of cost $\leq k$.

\section{Hardness of problem with multiple replicas allowed}\label{sec:rep}

We prove that $\Problem$ is NP-hard by reduction from the
\emph{Boolean Satisfiability Problem} ($\SAT$).  Since $\SAT$ is a
decision problem, we introduce a cost threshold $\Thr$ to transform
$\Problem$ into a decision problem too.

Let's first recall that the $\SAT$ problem asks whether a positive
valuation exists for a formula $\Formula$ with $\clauses$ clauses and
$\variables$ variables.  In the following, we will only focus on
$\SAT$ instances of at least four variables; this $\SAT$ variant is
still NP-hard.

\textbf{Construction.}  Given any formula $\Formula$ in
\emph{Conjunctive Normal Form (CNF)} with $\clauses$ clauses and
$\variables \geq 4$ variables, we produce a $\Problem$ instance as
follows: First, we construct a substrate tree $\Tree_{\Formula}$,
consisting of a root and separate gadgets for each variable of
$\Formula$, each of which is a child of the root.  The gadget of
variable $\variab$ consists of $\aroot(\variab)$ and its two children:
$\positive(\variab)$ and $\negative(\variab)$. Child
$\positive(\variab)$ has $\clauses$ many children labeled
$\nu_1, \nu_2, \ldots, \nu_{\clauses}$, and child $\negative(\variab)$
has $\clauses$ many children labeled
$\neg \nu_1, \neg \nu_2, \dots, \neg \nu_{\clauses}$. Every gadget has
the same structure: the same height and the same number of
leaves. This construction is illustrated in
Table~\ref{fig:construction_3sat}.

\begin{figure}
  \includegraphics{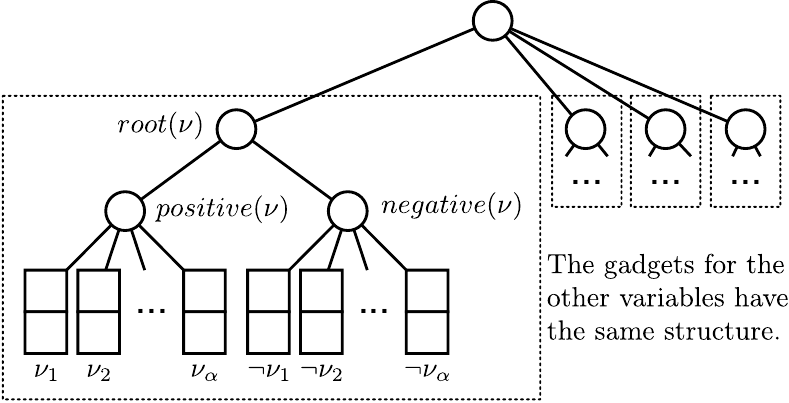}
  \caption{The construction of the variable gadget for $\nu$. If $\nu$
    appears in the first clause, a chunk $\achunk_1$, will be located
    at $\nu_1$. If $\neg \nu$ satisfies the last clause,
    $\neg \nu_\alpha$ will host $\achunk_\alpha$.}
  \label{fig:construction_3sat}
\end{figure}

For all variables $\nu$, we set the bandwidth on the link, connecting
$root(\nu)$ to the root of the substrate tree, to
$\clauses\cdot(\clauses\cdot\variables-\clauses)$. The bandwidth on
the other edges is not limited.

We set the number of nodes to $\Vms = \clauses \cdot \variables$.
Moreover, we define the inter-connect communication cost to be $1$,
and the access cost to be a sufficiently large constant $W$, such that
nodes must always be collocated with chunks ($W = \Thr + 1$ is
sufficient).


We set the number of chunk types to be equal to the number of clauses,
$\tau = \clauses$. To finish our construction, we place data chunks at
leaves, as follows: for the $i$-th clause we construct as many
replicas of chunk $\achunk_i$ as there are literals in the clause. For
each literal $\ell$ (of the form $\variab$ or $\neg \variab$) that
satisfies clause $i$, we place a replica of chunk $\achunk_i$ in the
leaf labeled $\ell_i$.

Note, that in this construction some nodes will be idle. No chunks
will be assigned to these nodes, but they will nevertheless
participate in the node interconnect.

We set the threshold $\Thr$ to:
$ \Thr = \variables \cdot ({\clauses \choose 2} \cdot 2 + \clauses
\cdot (\clauses \cdot \variables - \clauses))$.

\textbf{Proof of correctness of construction.}  We now show that our
construction indeed decides $\SAT$. We set the capacities such that in
every gadget, at most $\clauses$ nodes can be mapped, where $\clauses$
is the number of clauses of $\Formula$ .  We can apply the Bandwidth
Lemma (Lemma~\ref{lem:bandwidth-lemma}) as follows: We interpret $a_i$
as the number of nodes that are embedded in the $i$-th gadget,
$\clauses$ as the number of clauses, and $\variables$ as the number of
variables.  The LHS of the inequality of
Lemma~\ref{lem:bandwidth-lemma} is a formula for the communication
cost of nodes inside the $i$-th gadget to nodes outside the
gadget. The RHS of the inequality is the bandwidth constraint for the
gadget. This implies that any feasible solution must embed exactly
$\clauses$ nodes in every gadget.  Recall that in our $\SAT$ instance,
we have at least four variables.

\begin{theorem}
  The problem $\Problem$ is NP-hard.
  \label{theorem:3sat}
\end{theorem}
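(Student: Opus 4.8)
The plan is to show that the construction above is a polynomial‑time many‑one reduction from \SAT\ (restricted to instances with $\variables\ge4$ variables, which is still NP‑complete). The substrate tree $\Tree_{\Formula}$, the chunk placement, the node count $\Vms=\clauses\cdot\variables$, the costs $\CostCom=1$ and $\CostTrans=W=\Thr+1$, and the threshold $\Thr$ are all computable in time polynomial in $|\Formula|$, so the only thing to prove is that $\Formula$ is satisfiable if and only if the resulting instance admits a bandwidth‑feasible embedding of footprint at most $\Thr$.

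For the ``only if'' direction I would turn a satisfying assignment $\phi$ into such an embedding: in the gadget of each variable $\variab$, put one node on each of the $\clauses$ leaves below $\positive(\variab)$ when $\phi(\variab)$ is true, and on each of the $\clauses$ leaves below $\negative(\variab)$ otherwise. This places exactly $\clauses\cdot\variables=\Vms$ nodes, $\clauses$ per gadget. Since $\phi$ satisfies clause $i$, some literal $\ell$ of that clause is true under $\phi$, and by construction a replica of $\achunk_i$ sits on the now‑occupied leaf $\ell_i$; I assign $\achunk_i$ to the node there. Every chunk is then co‑located with its node, so the transportation term vanishes and its flow uses no link, and the only flow crossing the capacitated link of a gadget is the inter‑connect flow between its $\clauses$ nodes and the other $\clauses(\variables-1)$ nodes, which equals $\clauses\cdot(\clauses\cdot\variables-\clauses)$, exactly the link capacity; the embedding is feasible, and a count of the pairwise node distances gives footprint exactly $\Thr$.

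For the ``if'' direction I would start from any feasible embedding of footprint at most $\Thr$ and extract a satisfying assignment. First, because $\CostTrans=W=\Thr+1$, a chunk at positive distance from its node would already exceed the budget, so every chunk is co‑located with its node. Second, I would make precise the claim already stated before the theorem: the Bandwidth Lemma (Lemma~\ref{lem:bandwidth-lemma}) forces the number $a_i$ of nodes in gadget $i$ to satisfy $a_i\le\clauses$ or $a_i\ge\clauses(\variables-1)$; since a gadget has only $2\clauses$ leaves, hence at most $2\clauses$ servers, and $\variables\ge4$ makes $\clauses(\variables-1)\ge3\clauses>2\clauses$, the second alternative is impossible, so $a_i\le\clauses$, and as the $\variables$ values $a_i$ sum to $\clauses\cdot\variables$, every gadget holds exactly $\clauses$ nodes. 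Third, the $\clauses$ nodes of a gadget must all sit below $\positive(\variab)$ or all below $\negative(\variab)$, because moving $k$ of them to the other side leaves the footprint from node pairs in distinct gadgets unchanged but, by $\binom{k}{2}+\binom{\clauses-k}{2}+k(\clauses-k)=\binom{\clauses}{2}$, raises the within‑gadget inter‑connect cost by $2k(\clauses-k)>0$, exceeding $\Thr$. I then set $\phi(\variab)$ true iff gadget $\variab$ is populated below $\positive(\variab)$; since the node processing $\achunk_i$ lies on a leaf $\ell_i$ holding one of its replicas, and that leaf is on the side of its gadget matching the sign of $\ell$, the literal $\ell$ is true under $\phi$ and satisfies clause $i$, so $\phi\models\Formula$.

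The hard part will be the exact bookkeeping of the footprint at the two points where it has to equal $\Thr$: the threshold must be calibrated so that the embedding coming from a satisfying assignment hits it precisely while every deviation — a non‑co‑located chunk, an unbalanced gadget occupancy, or a gadget split between $\positive(\variab)$ and $\negative(\variab)$ — strictly increases it. I expect this to require splitting the footprint into its transportation, within‑gadget inter‑connect and cross‑gadget inter‑connect parts and arguing each is separately minimized by the intended embedding, and in particular checking that the $\clauses(\variables-1)$ unavoidable idle nodes give no slack. The Bandwidth Lemma does most of the work for the occupancy count, so the subtle point there is pairing it with the $2\clauses$‑leaves‑per‑gadget bound and the hypothesis $\variables\ge4$ to eliminate the ``$a_i\ge\clauses(\variables-1)$'' branch.
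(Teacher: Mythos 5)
Your proposal is correct and follows essentially the same route as the paper's own proof: the forward direction populates one side of each variable gadget according to the satisfying assignment and co-locates chunks, and the backward direction combines forced co-location (via $W=\Thr+1$), the Bandwidth Lemma for the per-gadget count $a_i=\clauses$, and a one-sidedness exchange argument to read off a valuation. You are in fact somewhat more explicit than the paper at two points it leaves implicit --- the $2k(\clauses-k)$ penalty for splitting a gadget across $\positive(\variab)$ and $\negative(\variab)$, and the uplink-flow feasibility check in the forward direction --- so no gap relative to the paper's argument.
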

\begin{proof}
  We will prove that formula $\Formula$ is satisfiable iff $\Problem$
  has a solution of cost $\leq \Thr$.

  ($\Rightarrow$) Let us take any valuation $\Val$ that satisfies
  $\Formula$.  We will construct a solution to $\Problem$ using $\Val$
  in the following way.  For each variable $\variab$ in $\Formula$, we
  embed $\clauses$ many nodes at the leaves of the gadget of
  $\variab$. We need to choose $\clauses$ out of $2 \cdot \clauses$
  leaves to embed nodes. If $\Val(\variab) = 1$, we embed nodes at the
  leaves of $\positive(\variab)$, else we embed all nodes at leaves
  $\negative(\variab)$.  The solution constructed this way has cost
  exactly $\Thr$, because the nodes are evenly split among gadgets,
  and nodes are not distributed across $\positive(\variab)$ and
  $\negative(\variab)$ subtrees.

  We calculate the chunk-node matching $\mu$ by assigning every chunk
  to the node which is collocated with the first chunk replica. This
  solution is feasible because every clause of $\Psi$ was satisfied
  and chunks correspond to clauses.

  Now we will show that this solution has cost $\Thr$.  Due to the
  Bandwidth Lemma (Lemma~\ref{lem:bandwidth-lemma}), we only have to
  consider the communication cost. We sum inner-gadget communication
  and communication among gadgets to get exactly $\Thr$.

  ($\Leftarrow$) Let us take any solution to $\Problem$ constructed
  based on $\Formula$ of cost $\leq \Thr$.  We will construct a
  positive valuation $\Val$ by considering the nodes in the solution
  to $\Problem$.

  We make the following observations. In every solution of cost
  $\leq \Thr$, every gadget has exactly $\clauses$ many nodes at its
  leaves. This is due to the Bandwidth Lemma
  (Lemma~\ref{lem:bandwidth-lemma}).  Also, inside every gadget either
  all nodes are in the $\positive(\variab)$ subtree of variable
  $\variab$, or in the $\negative(\variab)$ subtree. This is true
  because the cost of a solution where at least one gadget has nodes
  distributed across subtrees is always greater than $\Thr$.

  Now we can construct our valuation $\Val$, as follows (for each
  variable $\variab$ in $\Formula$): If $v_1$ hosts a node then
  $\Val(\variab) = 1$, otherwise $\Val(\variab) = 0$.

  The valuation $\Val$ satisfies all clauses, and hence $\Formula$, as
  the solution to $\Problem$ covers all chunks. To see this, consider
  the leaf which hosts a node which is assigned to any given chunk
  (i.e., the leaf handling any given clause chunk); it is a witness
  that the corresponding clause is satisfied.

\end{proof}

We conclude by observing that our construction leverages the fact that
the number of nodes may exceed the number of chunk types, e.g., for a
clause $(x \vee y \vee z)$ in $\Formula$, both $x$ and $y$ being true
implies the mapping of nodes on vertices labeled $x_1$ and $y_1$, and
which contain the same chunk $c_1$.

\subsection{Hardness of problem with two replicas of each type}\label{sec:two}

We can see that proof from previous section can be carried from
3SAT. This way we need only three replicas of each chunk
type. 2SAT is not NP-hard, not allowing to carry previous
construction for two replicas of each type. In this section we will
show how to modify the construction to show NP-hardness of problem
constrained to have at most two replicas of each type.

Our results so far indicate that dealing with replication can be
challenging.  However, all our hardness proofs concerned scenarios
with three replicas, which raises the question whether the problems
are solvable in polynomial time with a replication factor of
$2$. (Similarly to, say, the $\ZSAT$ problem which is tractable in
contrast to $\TSAT$.)

In the following, we show that this is not the case: the problem
remains NP-hard, at least in the capacitated network.

The proof is by reduction from $\TSAT$. Given a formula $\Formula$ in
conjunctive normal form, consisting of $\clauses$ clauses and $\vars$
variables, we construct a problem instance and substrate tree
$T_{\Formula}$ using two types of gadgets: gadgets for variables and
gadgets for clauses. \emph{Nota bene:} unlike in the previous proof we
will create three chunk types instead of just one, for every clause.

\begin{figure}[htbp]
  \includegraphics{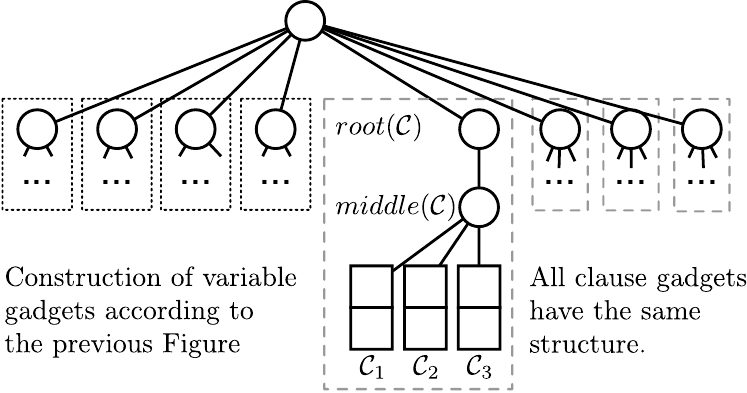}
  \caption{Structure of clause gadgets.}
  \label{fig:clause-gadget}
\end{figure}

\textbf{Construction.}  We build upon the construction for variable
gadgets introduced (see also Figure~\ref{fig:construction_3sat}).

\begin{enumerate}
\item \emph{Tree Construction}: In addition to the variable gadgets
  known from the previous construction, we introduce \emph{clause
    gadgets}. The clause gadget for a clause $\mathcal{C}$
  (illustrated in Figure~\ref{fig:clause-gadget}) has two inner
  vertices: \emph{root}$(\mathcal{C})$,
  \emph{middle}$(\mathcal{C})$\footnote{The only purpose of the middle
    vertex is to maintain the balanced tree property.} and three
  leaves $\mathcal{C}_1,\mathcal{C}_2$ and $\mathcal{C}_3$ . We
  connect leaves to the middle vertex, and the middle vertex to
  \emph{root}$(c)$. We attach the gadget to the tree by linking
  directly the global root to \emph{root}$(\mathcal{C})$. We construct
  our tree out of $\vars$ variable gadgets and $\clauses$ clause
  gadgets.
\item \emph{Chunk Distribution:} For each clause $\mathcal{C}$, we
  generate $3$ chunks types with $2$ replicas each. Each server in the
  clause gadget of $\mathcal{C}$ holds a replica of a different chunk
  type. The remaining replicas of chunk types, are placed in the
  variable gadgets of the variables which satisfy the clause,
  similarly to the previous proof.
  Thus, in total, $6 \cdot \clauses$ variable chunks are distributed
  in the substrate network.  We will consider a setting where
  $\clauses \cdot \vars + 2\clauses$ nodes need to be mapped. Our
  intention is that in every variable gadget, there will be $\clauses$
  nodes, and in every clause gadgets there will be two nodes.

\item \emph{Bandwidth Constraints:} The available bandwidth of the top
  edge of the gadget of each variable $\variab$ is set to
  $\capa(\variab) = \clauses (\clauses(\variables -1) + 2 \cdot
  \clauses)$.
  This value, results from $\clauses$ nodes in the gadget for
  $\variab$, which each have to communicate to
  $\clauses \cdot (\variables -1)$ nodes in other variable gadgets and
  $2\cdot \clauses$ nodes in clause gadgets.  The available bandwidth
  for the top edge of each clause gadget is set to
  $\capa(\clauses) = 2 (\clauses \cdot \variables + 2(\clauses -1)) $.
  This value allows both of the $2$ nodes in a clause gadget, to
  communicate to the $\clauses \cdot \variables$ nodes in variable
  gadgets and the the other $2(\clauses -1)$ nodes in clause gadgets.

\item \emph{Additional Properties:} We set the threshold in a similar
  fashion as in previous proofs. That is, the threshold depends on the
  intra-clause communication cost ($2$ hops), and the inter-clause
  communication cost ($6$ hops). We set the number of nodes to be
  placed to $\clauses \cdot \vars + 2 \cdot \clauses$. We set the
  hosting capacity of each server to $1$, and set
  $\CostTrans = \Thr + 1$ to disallow remote chunk access. We set
  $\CostCom = 1$.

\end{enumerate}

\textbf{Proof of correctness.}  We first prove the following helper
lemma.
\begin{lemma}
  Every valid solution to $\ProblemT$ with cost at most $\Thr$ has the
  property that there are exactly $\clauses$ nodes in each of the
  $\vars$ variable gadgets and exactly two nodes in each of the
  $\clauses$ clause gadgets.
\end{lemma}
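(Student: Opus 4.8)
The plan is to derive the stated node distribution purely from the bandwidth constraints on the top edges of the two kinds of gadgets, together with the elementary facts that all $\Vms = \clauses \cdot \vars + 2\clauses$ nodes must be hosted, each at a distinct leaf (hosting capacity one), and that every leaf of $T_\Formula$ belongs to exactly one gadget. Write $a_i$ for the number of nodes placed inside the $i$-th variable gadget and $b_j$ for the number placed inside the $j$-th clause gadget; then $\sum_{i=1}^{\vars} a_i + \sum_{j=1}^{\clauses} b_j = \Vms$.

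First I would bound from below the bandwidth a feasible solution reserves on the top edge of a gadget. The node inter-connect is a clique demanding $\CostCom = 1$ unit between every unordered pair of nodes, and each gadget hangs off the global root by a single link, so the tree path between two nodes traverses that link exactly when precisely one of the two nodes lies inside the gadget. Hence the top edge of the $i$-th variable gadget carries at least $a_i(\Vms - a_i)$ units (transportation demand can only add more), and feasibility gives $a_i(\Vms - a_i) \le \capa(\variab) = \clauses(\clauses(\vars-1) + 2\clauses) = \clauses(\Vms - \clauses)$; similarly the top edge of the $j$-th clause gadget yields $b_j(\Vms - b_j) \le \capa(\clauses) = 2(\Vms - 2)$. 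This is the same counting used in the Bandwidth Lemma (Lemma~\ref{lem:bandwidth-lemma}), now applied simultaneously to the variable gadgets and the clause gadgets.

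The next step is to read off $a_i \le \clauses$ and $b_j \le 2$ from these inequalities. The function $x \mapsto x(\Vms - x)$ is strictly increasing on $[0, \Vms/2]$, so it suffices to check that the values in play stay in that range. A variable gadget has only $2\clauses$ leaves, hence $a_i \le 2\clauses$; since $\TSAT$ remains NP-hard when restricted to instances with $\vars \ge 4$, we have $2\clauses \le \clauses(\vars+2)/2 = \Vms/2$, so $a_i(\Vms - a_i) \le \clauses(\Vms - \clauses)$ forces $a_i \le \clauses$. A clause gadget has only $3$ leaves, so $b_j \le 3 \le \Vms/2$ (using $\Vms = \clauses(\vars+2) \ge 6$), and $b_j(\Vms - b_j) \le 2(\Vms - 2)$ then forces $b_j \le 2$. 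Combining the counting identity with these upper bounds gives $\Vms = \sum_i a_i + \sum_j b_j \le \clauses \cdot \vars + 2\clauses = \Vms$, so every inequality is an equality: $a_i = \clauses$ for every variable gadget and $b_j = 2$ for every clause gadget.

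I expect the delicate point to be ruling out the ``wrap-around'' of $x \mapsto x(\Vms - x)$ past its maximum at $\Vms/2$ — a gadget holding a majority of the nodes could otherwise slip under its capacity bound; this is precisely where the physical leaf counts of the two gadget types and the harmless assumption $\vars \ge 4$ enter, and the capacities in the construction were chosen so that the bounds $a_i \le \clauses$ and $b_j \le 2$ are tight. Everything else is routine arithmetic matching the stated capacities to the intended counts; note in particular that the hypothesis ``cost at most $\Thr$'' is used only insofar as it guarantees that the solution is bandwidth-feasible.
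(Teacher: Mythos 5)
Your proof is correct, but it takes a genuinely different route from the paper's. The paper disposes of this lemma in one line by invoking the Extended Bandwidth Lemma (Lemma~\ref{lem:bandwidth-lemma-extended}), whose own justification is a two-stage aggregation argument: first sum the uplink capacities over each gadget family to conclude that exactly $2\clauses$ nodes sit in clause gadgets and $\clauses\cdot\variables$ in variable gadgets in aggregate, and only then apply the original Bandwidth Lemma (Lemma~\ref{lem:bandwidth-lemma}) separately to each family, which itself runs by contradiction through a case analysis on $a_k<\clauses$ versus $a_k>\clauses$. You instead extract the per-gadget upper bounds $a_i\le\clauses$ and $b_j\le 2$ directly: the top-edge reservation is at least $x(\Vms-x)$ for a gadget holding $x$ nodes, this map is strictly increasing on $[0,\Vms/2]$, and the physical leaf counts ($2\clauses$ and $3$) together with $\vars\ge 4$ keep every admissible $x$ below $\Vms/2$, so the capacity bounds $\clauses(\Vms-\clauses)$ and $2(\Vms-2)$ translate immediately into $a_i\le\clauses$ and $b_j\le 2$; the counting identity $\sum_i a_i+\sum_j b_j=\Vms=\clauses\vars+2\clauses$ then forces equality everywhere. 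Your version buys two things: it is self-contained (no case analysis, no pigeonhole step), and it sidesteps a real awkwardness in the paper's plan, namely that the original Bandwidth Lemma as stated assumes the sequence sums to $\clauses\cdot\variables$ and uses the quadratic form $a_i(\clauses\variables-a_i)$, whereas in the two-gadget setting the total is $\clauses\variables+2\clauses$ and the relevant form is $a_i(\Vms-a_i)$, so the claimed ``apply it separately to each family'' does not literally go through without reproving it. Your explicit attention to the wrap-around past $\Vms/2$ is exactly the point the paper's capacity choices are designed to handle, and your closing remark that only bandwidth feasibility (not the cost threshold itself) is used is accurate.
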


Correctness follows from the extended bandwidth lemma
(Lemma~\ref{lem:bandwidth-lemma-extended}).

\begin{theorem}
  $\ProblemT$ is NP-hard.
\end{theorem}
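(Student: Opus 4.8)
The plan is to verify that the reduction is correct, that is, that $\Formula$ is satisfiable if and only if the constructed instance of $\ProblemT$ admits a feasible solution of cost at most $\Thr$. Before either direction, I would record the ``canonical shape'' of any solution of cost $\le \Thr$, combining three facts: since $\CostTrans = \Thr+1$, every processed chunk replica must be collocated with the node processing it; by the helper lemma above, exactly $\clauses$ nodes sit in each variable gadget and exactly two in each clause gadget; and, by the choice of $\Thr$ together with the previous fact, inside every variable gadget all $\clauses$ nodes lie in a single subtree, under $\positive(\variab)$ or under $\negative(\variab)$ --- otherwise the communication cost strictly exceeds $\Thr$, by the same computation as in the proof of Theorem~\ref{theorem:3sat}.

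For the forward direction, given a satisfying assignment $\Val$, I would place the $\clauses$ nodes of the gadget of each variable $\variab$ on the leaves under $\positive(\variab)$ if $\Val(\variab)=1$ and on the leaves under $\negative(\variab)$ otherwise. For the $i$-th clause $\mathcal{C}$, I would pick one of its literals $\ell$ that $\Val$ makes true, let $j_0$ be the position of $\ell$ in $\mathcal{C}$, and place the two nodes of the clause gadget on the two leaves $\mathcal{C}_j$ with $j \ne j_0$. Then I would match the chunk type at position $j_0$ of $\mathcal{C}$ to the node on the variable-gadget leaf that carries its second replica (this leaf hosts a node precisely because $\ell$ is true under $\Val$), and match the two remaining chunk types to the two clause-gadget nodes. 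It remains to check that this is a feasible solution of cost $\le \Thr$: every chunk is processed by a collocated node, so the transportation cost is $0$; exactly $\clauses\vars + 2\clauses$ nodes are placed on distinct leaves; the reservations on the capped top edges meet the capacities with equality (a routine count --- $\clauses$ in-gadget nodes against the $\clauses(\vars-1)+2\clauses$ outside ones for a variable gadget, $2$ against $\clauses\vars+2(\clauses-1)$ for a clause gadget); and the communication cost is exactly the value to which $\Thr$ was set, since it depends only on the per-gadget node counts and the one-subtree placement, both of which hold here.

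For the converse, I would take any feasible solution of cost at most $\Thr$, invoke the canonical shape, and define $\Val(\variab) = 1$ iff the $\clauses$ nodes of the gadget of $\variab$ lie under $\positive(\variab)$. The heart of the argument is that $\Val$ satisfies every clause $\mathcal{C}$: each of the three chunk types of $\mathcal{C}$ must be processed by a node collocated with one of its two replicas --- either the replica at one of $\mathcal{C}_1, \mathcal{C}_2, \mathcal{C}_3$ or the replica at a variable-gadget leaf; but the clause gadget of $\mathcal{C}$ contains only two nodes, occupying two of the three leaves $\mathcal{C}_1, \mathcal{C}_2, \mathcal{C}_3$, so by pigeonhole at least one chunk type of $\mathcal{C}$ is processed by a node sitting on a variable-gadget leaf. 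By construction that leaf is the one indexed by $\mathcal{C}$ inside the gadget of the variable of one of $\mathcal{C}$'s literals $\ell$, and it lies under $\positive$ exactly when $\ell$ is positive and under $\negative$ exactly when $\ell$ is negative; in either case the definition of $\Val$ makes $\ell$ true, hence $\mathcal{C}$ is satisfied. Since $\mathcal{C}$ is arbitrary, $\Val$ satisfies $\Formula$.

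I expect the main obstacle to be this converse pigeonhole step and the bookkeeping around it: one has to argue precisely that ``three chunk types but only two nodes in a three-leaf clause gadget'' genuinely forces one clause chunk out into a variable gadget, and that the leaf then hosting its node is exactly a leaf encoding a satisfied literal. This is where the two-replica bound is respected yet still exploited --- splitting each clause into three chunk types, each with one replica in the (two-node, three-leaf) clause gadget and one replica at a literal-encoding variable-gadget leaf, manufactures the needed pressure even though no chunk type has more than two replicas. A secondary, purely computational chore is pinning down $\Thr$ and the edge capacities so tightly that the forward-direction solution just fits; any slack or overflow there would invalidate the reduction.
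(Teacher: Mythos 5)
Your proposal is correct and follows essentially the same route as the paper's proof: the forward direction places clause-gadget nodes on the two leaves not corresponding to the chosen satisfied literal and assigns chunks to collocated nodes, and the converse uses the bandwidth/structure lemmas plus the pigeonhole observation that two clause-gadget nodes cannot cover three chunk types, forcing a witness node into a variable gadget. The only cosmetic difference is that you define $\Val$ via the whole subtree rather than the first leaf, which is equivalent given the single-subtree property.
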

\begin{proof}
  We show that $\ProblemT$ has a solution of cost $\leq \Thr$ if and
  only if $\Formula\in \TSAT$ is satisfiable.

  ($\Rightarrow$) If we have a positive valuation of $\Formula$, we
  fill variable gadgets with nodes like in the proofs before. Then we
  place $2$ nodes in each of the $\clauses$ clause gadgets as follows:
  Given a clause $\mathcal{C} = \ell_1 \lor \ell_2 \lor \ell_3$, we
  pick an arbitrary literal which satisfies the clause. Subsequently
  we place nodes at the leaf nodes in the clause gadget, which
  correspond to the other two literals.  This strategy ensures that
  all chunks can be assigned to collocated nodes, as the only chunk
  type, which cannot be assigned to a collocated node in the clause
  gadget, has a node collocated with its second replica in the
  variable gadget.

  We will then assign chunks to nodes in the following way: For chunk
  type we assign the replica in the variable gadgets to a collocated
  node. If this node does not exist, we assign the replica in the
  clause gadgets, to its collocated node.

  Thus, we have produced a feasible solution of cost $\Thr$.
  ($\Leftarrow$) Let us take any solution $\Sol$ to $\ProblemT$ of
  cost $\leq \Thr$.  Similar to the proof of
  Theorem~\ref{theorem:3sat} and Lemma~\ref{lem:bandwidth-lemma} all
  nodes which are placed in a variable gadgets, will be located in
  either the \emph{positive} or the \emph{negative} subtree. Then we
  can compute a positive valuation by setting each variable $\variab$
  as follows:

  $\Val(\variab) =
  \begin{cases}
    1 & \mbox{iff there is a node at the first leaf}\\
    & \mbox{on positive side of $\variab$ gadget in $\Sol$}\\
    0 & \mbox{otherwise}
  \end{cases}$

  The theorem now follows from the following two additional lemmas.
  \begin{lemma}
    For every clause there exists a node in a variable gadget that
    processes one of three chunks that correspond to that clause.
  \end{lemma}
  \begin{proof}
    Each of the three chunks that correspond to each clause, is
    assigned a collocated node.  At least one of those three nodes is
    not idle in a variable gadget; otherwise, those two nodes in the
    clause gadgets would not suffice in satisfying all chunk types.
  \end{proof}

  Observe that it might happen that in $\Sol$, two nodes in clause
  variables are idle, and three nodes in variable gadgets are
  processing those $3$ chunk types. In this case, arbitrary nodes can
  be taken for the rest of the proof.

\begin{lemma}
  $\Val$ satisfies $\Formula$.
\end{lemma}
\begin{proof}
  Let us consider the matching $M$ of $\Sol$, and let us consider an
  arbitrary clause of $\Formula$ as well as its three chunk types: Due
  to bandwidth constraints, at most two of the chunks types, can be
  processed by nodes in the clause gadgets. We identify any chunk
  type, which is not assigned to a replica in the clause gadgets. The
  processed replica of that chunk type was located in a variable
  gadget. Depending on whether the replica was located in the positive
  or the negative subtree, we set the value of the according variable
  to $1$ (positive subtree) or $0$ (negative subtree).
\end{proof}

\end{proof}
\section{The Bandwidth Lemmas}\label{sec:bw}

\begin{lemma}[Bandwidth Lemma]\label{lem:bandwidth-lemma}
  Let $\clauses$ and $\variables > 4$ be two arbitrary positive integers. Let $a_1, a_2, \ldots,
  a_{\variables}$ be a sequence of $\variables$ integers which adds up to $\clauses \cdot \variables$. Also, for
  each $i$ we have $a_i \leq 2 \cdot \clauses$. Then it holds that if
  $$ \forall_i:~~ a_i \cdot (\clauses \cdot \variables - a_i) \leq \clauses \cdot (\clauses \cdot \variables -
  \clauses), $$
\noindent  then for each $i$: $a_i = \clauses$.
\end{lemma}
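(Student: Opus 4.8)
The plan is to reduce the claim to a one-variable statement about the downward parabola $f(x) = x(N - x)$, where I abbreviate $N := \clauses \cdot \variables$. First I would rewrite the hypothesis for a fixed index $i$ in factored form. Since
$f(a_i) - f(\clauses) = N(a_i - \clauses) - (a_i^2 - \clauses^2) = (a_i - \clauses)(N - \clauses - a_i)$,
the assumed inequality $a_i \cdot (N - a_i) \le \clauses \cdot (N - \clauses)$ is equivalent to $(a_i - \clauses)(N - \clauses - a_i) \le 0$.

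Next I would read off the sign condition. A product of two reals is nonpositive exactly when the factors have opposite signs (or one of them is zero). Because $\variables \ge 2$ we have $\clauses \le N - \clauses$, so the two cases collapse into: for every $i$, either $a_i \le \clauses$ or $a_i \ge N - \clauses = \clauses \cdot (\variables - 1)$. This is the step that carries the argument; the rest is bookkeeping.

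Then I would eliminate the second alternative using the remaining hypotheses. Since $\variables > 4$ and $\clauses \ge 1$ (both are positive integers), we get $N - \clauses = \clauses \cdot (\variables - 1) \ge 4\clauses > 2\clauses \ge a_i$, contradicting $a_i \ge N - \clauses$. Hence $a_i \le \clauses$ for every $i$. Finally I would conclude by counting: $\sum_i (\clauses - a_i) = \clauses \cdot \variables - \sum_i a_i = 0$, and each summand is nonnegative by the previous line, so every summand vanishes, i.e. $a_i = \clauses$ for all $i$ (note this last step needs no nonnegativity assumption on the $a_i$ themselves).

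I do not expect a genuine obstacle here; the only thing to keep straight is the chain of inequalities $N - \clauses > 2\clauses$, which uses both that $\variables$ is large (here $\variables - 1 \ge 4$, though $\ge 3$ would already suffice) and that $\clauses$ is at least $1$ — so that, rather than any real difficulty, is the point I would be careful about.
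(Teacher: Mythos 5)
Your proof is correct and rests on the same algebraic core as the paper's: your factored inequality $(a_i-\clauses)(\clauses\variables-\clauses-a_i)\le 0$ is exactly the paper's $0 \le x\,(x - \clauses(\variables-2))$ after the substitution $a_i = \clauses + x$, and both arguments then discharge the large alternative via $a_i\le 2\clauses < \clauses(\variables-1)$. The only difference is organizational: you argue directly (every $a_i\le\clauses$, then the sum constraint forces equality), while the paper argues by contradiction, using a pigeonhole step to reduce the case $a_k<\clauses$ to the case $a_k>\clauses$ — your direct version is, if anything, slightly cleaner.
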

\begin{proof}
  By contradiction. Let us assume
that there exists an index $k$ such that
$a_k \neq \clauses$. Then we can distinguish between two cases:
either $a_k<\clauses$ or
$a_k>\clauses$.

\textbf{Case $a_k<\clauses$:} If there exists a $k$ with $a_k<\clauses$,
due to the fact that the sequence adds up to $\clauses \cdot \variables$,
there must also exist a $k'$ such that $a_{k'}<\clauses$ (by a simple
pigeon hole principle). Thus, this case can
also be reduced to the second case (Case $a_k>\clauses$) proved
next.

\textbf{Case $a_k>\clauses$:} Since it also holds that $a_k < 2\clauses$,
$a_k$ must be of the form $\clauses + x$ for $x \in [1, \ldots, \clauses]$.
Let us consider the (bandwidth) inequality:
$$ (\clauses + x) \cdot (\clauses \cdot \variables - \clauses - x) \leq \clauses \cdot (\clauses \cdot \variables - x) $$

This can be transformed to:

$$ 0 \leq x(x - (\clauses \cdot (\variables - 2))) $$

The equation holds for $x \leq 0$ or $x \geq \clauses \cdot (\variables - 2)$,
and no
positive $x \leq \clauses$ can satisfy this inequality for $\variables > 4$. Contradiction.
\end{proof}
\begin{lemma}[Extended Bandwidth Lemma]\label{lem:bandwidth-lemma-extended}
  Let $\clauses$ and $\variables > 4$ be two arbitrary positive integers. Let $a_1, a_2, \ldots,
  a_{\clauses}$ and $b_1, b_2, \ldots,
  b_{\variables}$ be two sequences of integers (numbers of nodes in clause and 
in variable gadgets). The sum of all elements in $a$ and $b$ adds up to
  $\clauses \cdot \variables + \clauses \cdot 2$ (number of nodes). Also
  we have $a_i \leq 2 \cdot \clauses$
  (variable gadget node hosting capacity -- equal to number of leaves),
  and $b_i \leq 3$ (clause gadget node hosting capacity). If uplink of variable 
gadget does not exceed bandwidth constraints
  $$ \forall_{i\leq\variables}:~~ b_i \cdot (\clauses \cdot \variables
  + 2\cdot \clauses- b_i) \leq \clauses \cdot (\clauses \cdot \variables -
  \clauses + 2 \cdot \clauses), $$
and uplink of clause gadget does not exceed bandwidth constraints
$$ \forall_{i\leq\clauses}:~~ a_i \cdot (\clauses \cdot \variables + 2 \cdot \clauses - a_i) \leq 2 \cdot (\clauses \cdot \variables -
  2 \cdot \clauses - 2), $$

\noindent  then for each $i\leq \variables$: $b_i = \clauses$ and for
each $i\leq\clauses$: $a_i = 2$ (we have expected number of nodes in variable 
and clause gadgets).
\end{lemma}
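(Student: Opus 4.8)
The plan is to mimic the proof of the (basic) Bandwidth Lemma, but now we have two coupled families of constraints, one controlling the variable-gadget occupancies $b_i$ and one controlling the clause-gadget occupancies $a_i$. Let me write $N = \clauses\variables + 2\clauses$ for the total node count, so $\sum_i a_i + \sum_i b_i = N$. First I would establish the two constraints separately, treating each as an instance of the same quadratic argument used in Lemma~\ref{lem:bandwidth-lemma}: a bound of the form $z(N - z) \le T$ forces $z$ to lie outside an interval $(z_-, z_+)$ around $N/2$, and with the hosting-capacity caps $b_i \le 2\clauses$ and $a_i \le 3$ this pins each value to one side.

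Concretely, for the variable-gadget inequality $b_i(N - b_i) \le \clauses(N - \clauses)$ I would note the RHS is exactly the value of the LHS at $b_i = \clauses$, so the inequality rearranges (exactly as in the basic lemma, since $N$ here plays the role of $\clauses\variables$ there) to $0 \le (b_i - \clauses)\big(b_i - (N - \clauses)\big)$. Hence either $b_i \le \clauses$ or $b_i \ge N - \clauses = \clauses\variables + \clauses$. The latter is impossible because $b_i \le 2\clauses < \clauses\variables + \clauses$ for $\variables > 4$ (indeed for $\variables \ge 2$), so $b_i \le \clauses$ for all $i$. For the clause-gadget inequality $a_i(N - a_i) \le 2(N - 2\clauses - 2) = 2(\clauses\variables - 2)$, similarly the RHS equals the LHS evaluated at $a_i = 2$ (using $N - 2 = \clauses\variables + 2\clauses - 2$), so it rearranges to $0 \le (a_i - 2)\big(a_i - (N - 2)\big)$, giving $a_i \le 2$ or $a_i \ge N - 2$; the latter contradicts $a_i \le 3$, so $a_i \le 2$ for all $i$.

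Combining: $\sum_i b_i \le \clauses\variables$ (over $\variables$ terms) and $\sum_i a_i \le 2\clauses$ (over $\clauses$ terms), so $\sum a_i + \sum b_i \le \clauses\variables + 2\clauses = N$. But the total is exactly $N$, forcing equality in both sums, which is only possible if $b_i = \clauses$ for every $i \le \variables$ and $a_i = 2$ for every $i \le \clauses$.

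The only mildly delicate point — and the place I'd be most careful — is checking that the quadratics really do factor with the claimed roots and that the hosting-capacity caps strictly undercut the large root in each case; the $\variables > 4$ hypothesis is what gives slack there (and is inherited from the basic lemma), and one should double-check the arithmetic $2(N - 2\clauses - 2) = 2(\clauses\variables - 2)$ and $N - \clauses = \clauses(\variables+1)$ so that the "value at the intended occupancy" identities hold on the nose. Everything else is a two-line counting argument once the per-gadget upper bounds are in hand.
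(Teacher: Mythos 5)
Your overall strategy is sound and, in fact, more complete than the paper's own argument: the paper only sketches a proof (sum the uplink capacities to pin the family totals at $2\clauses$ and $\clauses\cdot\variables$, then invoke Lemma~\ref{lem:bandwidth-lemma} on each family separately), whereas you derive per-gadget upper bounds $b_i \le \clauses$ and $a_i \le 2$ directly from the individual quadratic constraints and let the global sum force equality. That decomposition is cleaner, and your treatment of the variable gadgets is correct on the nose: with $N = \clauses\variables + 2\clauses$ one indeed has $\clauses(\clauses\variables - \clauses + 2\clauses) = \clauses(N-\clauses)$, the factorization $(b_i - \clauses)(b_i - (N-\clauses)) \ge 0$ is right, and $N - \clauses = \clauses(\variables+1)$ safely exceeds the hosting capacity $2\clauses$.

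The gap is exactly where you said you would be careful: the clause-gadget identity fails. The lemma's right-hand side is $2(\clauses\variables - 2\clauses - 2)$, whereas the left-hand side evaluated at $a_i = 2$ is $2(N-2) = 2(\clauses\variables + 2\clauses - 2)$; these differ by $8\clauses$, and your own rewriting $2(N - 2\clauses - 2) = 2(\clauses\variables - 2)$ is a third, again different, quantity. So the claim that ``the RHS equals the LHS evaluated at $a_i = 2$'' is false as stated. Worse, with the literal RHS the value $a_i = 2$ already violates the hypothesis (the parabola $z(N-z)$ is increasing on $[0,N/2]$), so the constraint forces $a_i \le 1$, whence $\sum_i a_i + \sum_i b_i \le \clauses\variables + \clauses < N$ contradicts the sum hypothesis and the lemma becomes vacuously true---and useless for the reduction. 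The repair is to take the clause-gadget bound to be $2(\clauses\variables + 2(\clauses-1)) = 2(N-2)$, which is the capacity actually assigned to clause-gadget uplinks in the construction; with that value your factorization $(a_i - 2)(a_i - (N-2)) \ge 0$ goes through, the cap $a_i \le 3$ rules out the large root, and the rest of your counting argument is correct. (The hypothesis labels $a_i \le 2\clauses$ and $b_i \le 3$ are likewise swapped in the statement; you silently used the intended versions, which is the right call, but both corrections should be stated explicitly rather than deferred to a ``double-check.'')
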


We can prove the extended bandwidth lemma by pidgeon hole
principle. However, easier way exists. We sum available bandwidth on
all uplinks of clause gadgets to $C$ and bandwidth an uplinks of
variable gadgets to $V$. The only way that we can distribute nodes
between clause and variable gadgets is to have $2 \cdot \clauses$ in
total in clause gadgets and $\clauses \cdot \variables$ in variable gadgets. To conclude, we
 apply bandwidth lemma \ref{lem:bandwidth-lemma} to clause gadgets
and separatly to variable gadgets.

\section{Related Work}\label{sec:relwork}

There has recently been much interest in programming models and distributed
system architectures for the processing and analysis of big data (e.g.~\cite{nodb,mapreduce,shark}). The model studied in
this paper is motivated by MapReduce~\cite{mapreduce} like batch-processing applications, also known
from the popular open-source implementation \emph{Apache Hadoop}.
These applications
generate large amounts of network traffic~\cite{orchestra,talk-about,amazonbw},
and over the last years, several systems have been proposed which provide
a provable network performance, also in shared cloud environments, by supporting
relative~\cite{faircloud,elasticswitch,seawall}
or, as in the case of our paper, \emph{absolute}~\cite{oktopus,secondnet,drl,gatekeeper,proteus} bandwidth reservations
between the virtual machines.

The most popular virtual network abstraction for batch-processing applications today is the \emph{virtual cluster},
introduced in the Oktopus paper~\cite{oktopus}, and later studied by many others~\cite{talk-about,proteus}.
Several heuristics have been developed to compute ``good'' embeddings of virtual clusters: embeddings
with small footprints (minimal bandwidth reservation costs)~\cite{oktopus,talk-about,proteus}.
The virtual network embedding problem has also been studied for more general graph abstractions
(e.g., motivated by wide-area networks).~\cite{boutaba-survey,fischer-survey}


\section{Summary and Conclusion}\label{sec:conclusion}
We shown that several embedding problems are NP-hard already in 
 three-level trees---a practically relevant result given today's datacenter topologies~\cite{fattree})---and even if the the number of replicas is bounded by two.

{\footnotesize \renewcommand{\baselinestretch}{.9}
\bibliographystyle{abbrv}
\bibliography{references}
}

\end{document}